\newtheorem{theorem}{Theorem}
\newtheorem{corollary}{Corollary}
\newtheorem{lemma}{Lemma}
\newtheorem{definition}{Definition}
\newtheorem{remark}{Remark}
\newtheorem{assumption}{Assumption}
\newtheorem{proposition}{Proposition}
\title{\LARGE \bf
Locally-Aware Constrained Games on Networks
}
\author{Guanze Peng,  Tao Li, Shutian Liu, Juntao Chen, and Quanyan Zhu
\thanks{G. Peng, T. Li, S. Liu and Q. Zhu are with the Department of Electrical and Computer Engineering, New York University, Brooklyn, NY, 11201,  USA. Email:
        {\tt\small {\{gp1363, sl6803, tl2636, qz494\}}@nyu.edu}}%
\thanks{J. Chen is with the Department of Computer and Information Sciences,
        Fordham University, New York, NY, 10023, USA.
        Email: {\tt\small jchen504@fordham.edu}}%
}
\begin{document}

\maketitle
\thispagestyle{empty}
\pagestyle{empty}

\begin{abstract}
Network games have been instrumental in understanding strategic behaviors over networks for applications such as critical infrastructure networks, social networks, and cyber-physical systems. One critical challenge of network games is that the behaviors of the players are constrained by the underlying physical laws or safety rules, and the players may not have complete knowledge of network-wide constraints.
To this end, this paper proposes a  game framework to study constrained games on networks, where the players are locally aware of the constraints. We use \textit{awareness levels} to capture the scope of the network constraints that players are aware of. We first define and show the existence of generalized Nash equilibria (GNE) of the game, and point out that higher awareness levels of the players would lead to a larger set of GNE solutions. We use necessary and sufficient conditions to characterize the GNE, and propose the concept of the dual game to show that one can convert a locally-aware constrained game into a two-layer unconstrained game problem. We use linear quadratic games as case studies to corroborate the analytical results, and in particular, show the duality between Bertrand games and Cournot games.

\end{abstract}

\begin{IEEEkeywords} 
Constrained Games, Network Games, Games with Incomplete Information, Duality.
\end{IEEEkeywords}
\section{Introduction}
  Research on networks has been received a significant amount of attention over the last few decades for their applications in many essential fields, such as infrastructure systems, wireless communications, and the Internet of Things (IoT) \cite{kurian2017electric,manshaei2013game,chen2019optimal}. Game theory provides a bottom-up approach to analyze and design networks, especially when the network is large. The equilibrium solution concepts provide a quantitative understanding of the outcome of strategic behaviors over networks. They have become the standard tools in the design of communication networks \cite{alpcan2002cdma}, transportation networks \cite{altman2004equilibrium}, and energy systems \cite{chen2016game}.  
There are several critical challenges in applying game theory to networks in engineering.  First, the games played on networks are usually subject to  constraints that arise from physics, such as Kirchhoff's laws for power systems and conservation laws in water systems, or safety, such as security compliance and rules. Hence, the equilibrium of the network games needs first and foremost to satisfy these network constraints. Second, not every player in the network is aware of the same constraints over the network. In most cases, players are only locally aware of a subset of constraints and then make decisions. This feature arises from the fact that players have bounded rationality. Being fully aware of the network constraints is not always possible, especially for large networks. 

In this work, we propose a locally-aware constrained non-cooperative game on networks to address these challenges. The players have only local knowledge of the constraints at his node and his neighbors. We propose the concept of \textit{awareness} to capture the knowledge of the subset of constraints that the players need to satisfy. We fully characterize the sufficient and necessary conditions of the equilibrium solution under diverse awareness of constraint sets. We show that the proposed framework generalizes the extreme cases where  all the players have the same and complete knowledge of the network constraints. We transform the locally-aware games to an equivalent dual game, which consists of two hierarchical unconstrained games. As a case study, we use linear quadratic games to numerically corroborate the analytical results. In particular, we show that Cournot games and Bertrand games can be viewed as the dual games of each other.   {
The proposed game is applicable to various applications. For example, in the distributed generation systems, each player determines the amount of power to generate by considering the demand and capacity constraints. It is possible that the players may have heterogeneous level of awareness on these constraints, yielding different equilibrium strategies. We further corroborate the results with an application to the energy system.}

Our work is closely related to the literature on constrained games, e.g., \cite{pavel2007extension,charnes1953constrained}, where the set of constraints is common knowledge and the players have the same set of constraints.
The equilibrium concept that we investigate in this work is the {generalized Nash equilibrium} (GNE), which has been studied in \cite{facchinei2009generalized,debreu1952social}. The computational methods of GNEs have been studied in \cite{paccagnan2016distributed,facchinei2011computation}.   {This paper falls in the category of constrained games with asymmetric constraint information, which have been investigated in \cite{pavel2019distributed}, \cite{yi2019operator}, \cite{yi2019asynchronous}, \cite{belgioioso2019distributed}. Most of the existing work focus on the algorithm design to find GNEs of this class of games, while this work aims to characterize the GNE and study the impact of the information structure on the GNEs.
}

The rest of the paper is organized as the following: Section~\ref{PROBLEM STATEMENT} formally defines the locally-aware constrained games.
Section~\ref{GLOBALLY-SHARED CONSTRAINED GAME} revisits the results in the globally-aware constrained game. In Section~\ref{LOCALLY-SHARED CONSTRAINED GAME}, we focus on the locally-aware constrained game and its relationship with globally-aware constrained game. Moreover, we show that under certain assumptions, a locally-aware constrained game can be decomposed into two unconstrained games. In Section~\ref{Application}, we apply the decomposition result to enunciate the relationship between Bertrand game and Cournot game. We use two linear-quadratic game examples to corroborate the theoretic results. Section~\ref{CONCLUSIONS AND FUTURE WORK} concludes the paper and points out the possible direction of the future work.

The rest of the paper is organized as the following: Section~\ref{PROBLEM STATEMENT} formally defines the locally-aware constrained games.
Section~\ref{GLOBALLY-SHARED CONSTRAINED GAME} revisits the results in the globally-aware constrained game. In Section~\ref{LOCALLY-SHARED CONSTRAINED GAME}, we focus on the locally-aware constrained game and its relationship with globally-aware constrained game. Moreover, we show that under certain assumptions, a locally-aware constrained game can be decomposed into two unconstrained games. In Section~\ref{Application}, we apply the decomposition result to enunciate the relationship between Bertrand game and Cournot game. We use two linear-quadratic game examples to corroborate the theoretic results. Section~\ref{CONCLUSIONS AND FUTURE WORK} concludes the paper and points out the possible direction of the future work.

\section{Problem Statement}\label{PROBLEM STATEMENT}

\subsection{Game Framework}
Consider a connected network, which contains $N$ nodes (players) described by an undirected graph $\mathcal{G}=\left(\mathcal{N},\mathcal{E}\right)$ with $\mathcal{N}=\{1,2,...,N\}$ as the index set of the nodes. The set of edges is denoted by $\mathcal{E}$ which contains the links between connected nodes. For each node $i\in\mathcal{N}$, denote the set of its neighbors by $\mathcal{N}_i$ (including himself) and the number of its neighbors by $|\mathcal{N}_i|$. Suppose that a noncooperative constrained game is played among the $N$ players. For each player $i\in\mathcal{N}$, let $\mathcal{U}_i\subseteq\mathbb{R}^{K_i}$ be the continuous compact action set, where $K_i$ is a positive integer, and let $f_i:\prod_{i=1}^N\mathcal{U}_i\rightarrow \mathbb{R}$ be the cost function of Player $i$. Denote Player $i$'s action by $u_i$, and let $u=(u_j)_{j\in\mathcal{N}}\in\mathcal{U}:=\prod_{{j\in\mathcal{N} }}\mathcal{U}_j$, and $u_{-i}=(u_j)_{j\in\mathcal{N} \setminus\{i\}}\in\mathcal{U}_{-i}:=\prod_{{j\in\mathcal{N} \setminus\{i\}}}\mathcal{U}_j$. The constrained game can be defined as a tuple,
\begin{equation}
    \Xi:=( \mathcal{N},\{\mathcal{N}_i,f_i,\mathcal{U}_i,\mathcal{F}_i\}_{i\in\mathcal{N}},\{\hat{\Omega}_m\}_{m\in\mathcal{M}},\mathcal{A},\mathfrak{s}),
\end{equation}
where 
\begin{enumerate}
        \item $\hat{\Omega}_m\subseteq\mathbb{R}^{\prod_{i=1}^NK_i},m\in\mathcal{M}=\{1,2,...,M\},$ is a feasible set that represents a constraint on the tuple of all the players' actions, $u$;
        \item $\mathcal{A}$ is a finite set of \textit{awareness} levels;
        \item $\mathcal{F}_i$ is a subset of $\Omega_i$, i.e., $\mathcal{F}_i\subseteq \Omega:=\{\hat{\Omega}_m\}_{m\in\mathcal{M}}$, $i\in\mathcal{N}$;
        \item $\mathfrak{s}:\mathcal{A}\rightarrow 2^\Omega$, is a mapping associating an awareness level with a subset of constraints, where $2^\Omega$ is the power set of $\Omega$.
\end{enumerate}

We call an element of $\mathcal{A}$ an \textit{awareness level}; $\Omega_i$ the \textit{awareness set} of Player $i$, which quantifies Player $i$'s knowledge of the constraints. At the beginning of the game $\Xi$, nature \`a la Harsarnyi \cite{harsanyi1967games} determines the awareness level of each player from $\mathcal{A}$. His awareness level indicates his knowledge of a feasible subset of $\{\hat{\Omega}_m\}_{m\in\mathcal{M}}$ via the mapping $\mathfrak{s}$. Specifically, let the awareness set be $\mathcal{A}=2^{\mathcal{M}}.$
Denote a subset of $\mathcal{M}$ by $\tilde{\mathcal{M}}\in 2^{\mathcal{M}}$ (i.e., $\tilde{\mathcal{M}}\subseteq{\mathcal{M}}$). Then, $\mathfrak{s}({\tilde{\mathcal{M}}})=\{\hat{\Omega}_{m}\}_{m\in\tilde{\mathcal{M}}}.$
In other words, the players of awareness level ${\tilde{\mathcal{M}}}$ are aware of $\{\hat{\Omega}_{m}\}_{m\in\tilde{\mathcal{M}}}$.
\begin{definition}\label{moreaware}
In the constrained game $\Xi$, if two awareness levels ${\tilde{\mathcal{M}}},{\hat{\mathcal{M}}}\in\mathcal{A}$, satisfy 
$ \mathfrak{s}({\tilde{\mathcal{M}}})\subseteq\mathfrak{s}({\hat{\mathcal{M}}}),$
    {or equivalently,} 
    ${\tilde{\mathcal{M}}}\subseteq{\hat{\mathcal{M}}},$
then we say that the player with awareness level ${\hat{\mathcal{M}}}$ is \textit{more aware} than the player with awareness level ${\tilde{\mathcal{M}}}$.
\end{definition}
By Definition~\ref{moreaware}, the awareness level set $\mathcal{A}$ can be considered as a Boolean lattice, which is a partially ordered set. This fact enables us to compare the awareness levels. In a locally-aware constrained game, the players can observe the constraints of his neighbors. Denote the awareness level of Player $i$ as ${\tilde{\mathcal{M}}_i}$, $i\in\mathcal{N}$. Then, according to the description of the game, the locally-aware constraint set of Player $i$ is 
\begin{equation}\label{infoset}
\begin{aligned}
\mathcal{F}_i=\mathfrak{s}\left({\bigcup_{j\in\mathcal{N}_i}\tilde{\mathcal{M}}_{j}}\right)=\left\{\hat{\Omega}_{m}\right\}_{m\in\tilde{\mathcal{M}}_{j},{j}\in\mathcal{N}_i}.
\end{aligned}
\end{equation}
We can see that acquiring the knowledge of the neighbors' constraints is equivalent to increasing the awareness level of the players, as for each player $i\in\mathcal{N}$ and every possible $\mathcal{N}_i$, we  have 
$\mathfrak{s}({\tilde{\mathcal{M}}_{i}})\subseteq\mathfrak{s}({\bigcup_{j\in\mathcal{N}_i}\tilde{\mathcal{M}}_{j}}).$

Note that the locally-aware games framework considers a different aspect of the incomplete-information games. Generally speaking, the knowledge in games can be classified into three categories: (i) the knowledge that a player knows that he knows, (ii) the knowledge that a player knows that he does not know, and (iii) the knowledge that player does not know that he does not know. There is a rich literature on the information structure induced by the first two classes of knowledge (see \cite{harsanyi1967games,aumann1995repeated}), whilst the awareness framework investigates the third class of knowledge. In the locally-aware games framework, the player is only aware of the constraints that he knows, but unaware of the existence of other constraints.

\subsection{Equilibrium Concept and Existence}

From the information set given in $\eqref{infoset}$, each player $i\in\mathcal{N}$ faces the parameterized optimization problem given by the following:
\begin{equation}\label{unsharedoptori}
    \begin{aligned}
  \min_{u_i\in\mathcal{U}_i}&\quad f_i(u_i,u_{-i})\\
        \text{s.t.}&\quad  (u_i,u_{-i})\in\Omega_{j},\quad {j}\in\mathcal{N}_i,
    \end{aligned}
\end{equation}
where $\Omega_{j}:=\bigcap_{m\in\tilde{\mathcal{M}}_{j}}\hat{\Omega}_{m}$. We assume that the feasible set can be represented algebraically by functions ${g}_i:\mathcal{U}\rightarrow \mathbb{R}$, $i\in\mathcal{N}$, as follows:
\begin{equation*}
    \Omega_i=\left\{(u_i,u_{-i}):{g}_{j}(u_i,u_{-i})\leq {0}, j\in\mathcal{N}_i\right\}.
\end{equation*}
Therefore, \eqref{unsharedoptori} can be rewritten as
\begin{equation}\label{unsharedopt}
    \begin{aligned}
     \textbf{OP}_i:\quad   \min_{u_i\in\mathcal{U}_i}&\quad f_i(u_i,u_{-i})\\
        \text{s.t.}&\quad  {g}_{j}(u_i,u_{-i})\leq {0},\quad j\in\mathcal{N}_i.
    \end{aligned}
\end{equation}
Given $u_{-i}$, the best response of Player $i$ is the optimal solution to \eqref{unsharedopt} parameterized by $u_{-i}$. We denote the best response of Player $i$ by $BR_i(u_{-i})$, which is a mapping $BR_i: \mathcal{U}_{-i}\rightarrow\mathcal{U}_i$.


A suitable equilibrium concept for the constrained games is generalized Nash equilibrium (GNE) given by the following.

\begin{definition}\label{usGNE}
A tuple $u^*=(u^*_i,u_{-i}^*)$ is said to be a GNE solution for the game $\Xi$, if $
    f_i(u^*_i,u_{-i}^*)\leq f_i(u_i,u_{-i}^*),\  \forall u_i\in \Omega_i(u_{-i}^*),\  i\in\mathcal{N},$
where $\Omega_i(u_{-i}^*):=\left\{u\in\mathcal{U}_i:{g}_{j}(u,u^*_{-i})\leq {0},\ j\in\mathcal{N}_i\right\}.$
\end{definition}

Define the global feasible set by $
    \tilde{\Omega}:=\left\{u:{g}_i(u)\leq {0},\ \forall\ i\in\mathcal{N}\right\}$,
which consists of all the feasible solutions that satisfy all the constraint. We make the following regularity assumptions.
\begin{assumption}[Regularity Condition]\label{bigassump}
The following conditions hold throughout this work:
\begin{enumerate}
    \item[a)] $f_i$ and ${g}_i$, $i\in\mathcal{N},$ are continuously differentiable;
    \item[c)] $\mathcal{U}_i$, $i\in\mathcal{N},$ are convex and compact;
    \item[d)] $\tilde{\Omega}$ is nonempty;
    \item[e)] For all $i\in\mathcal{N}$, given any ${u}_{-i}\in\mathcal{U}_{-i}$, \textbf{OP}$_i$ is a convex optimization problem.
\end{enumerate} 
\end{assumption}

\begin{assumption}[Collective Awareness Condition]
The following assumption regarding awareness is assumed to hold:
\begin{equation*}
    \bigcup_{i\in\mathcal{N}}\tilde{\mathcal{M}}_i=\mathcal{M}.
\end{equation*}
\end{assumption}

\begin{theorem}\label{thm1}
 Let $BR_i(u_{-i})$ be the optimal solution to \textbf{OP}$_i$, $i\in\mathcal{N}$. The following fixed-point equation admits at least one fixed-point solution:
 \begin{equation}\label{assumpfeq}
     u\in BR(u):=\begin{bmatrix}
     BR_1(u_{-1})\\
     BR_2(u_{-2})\\
     \vdots\\
     BR_N(u_{-N})
     \end{bmatrix}
     ,
 \end{equation}
where $BR:\mathcal{U}\rightarrow 2^\mathcal{U}$. Moreover, there exists at least one GNE in $\Xi$.
\end{theorem}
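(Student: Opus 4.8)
The plan is to obtain the fixed point of \eqref{assumpfeq} via Kakutani's fixed-point theorem applied to the joint best-response correspondence $BR:\mathcal{U}\to 2^{\mathcal{U}}$, and then to identify any fixed point with a GNE through Definition~\ref{usGNE}. First I would record the ambient geometry required by Kakutani: by Assumption~\ref{bigassump}(c) each $\mathcal{U}_j$ is convex and compact, so $\mathcal{U}=\prod_{j\in\mathcal{N}}\mathcal{U}_j$ is a nonempty, convex, compact subset of a Euclidean space, and $BR$ maps $\mathcal{U}$ into its own power set. It then suffices to check that $BR$ is nonempty-, convex-, and closed-valued with a closed graph.

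The point-valued properties come from classical optimization facts. For nonemptiness and compactness of $BR_i(u_{-i})$ I would apply the Weierstrass extremum theorem: $f_i$ is continuous by Assumption~\ref{bigassump}(a), while the feasible slice $\Omega_i(u_{-i})=\{u_i\in\mathcal{U}_i: g_j(u_i,u_{-i})\le 0,\ j\in\mathcal{N}_i\}$ is the intersection of the compact set $\mathcal{U}_i$ with closed sublevel sets of the continuous maps $g_j$, hence compact; minimizing a continuous function over a nonempty compact set yields a nonempty compact set of minimizers. For convexity I would invoke Assumption~\ref{bigassump}(e): since $\textbf{OP}_i$ is a convex program for each fixed $u_{-i}$, its minimizer set is convex, and consequently $BR(u)=\prod_{i\in\mathcal{N}}BR_i(u_{-i})$ is convex-valued.

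The closed-graph (upper hemicontinuity) requirement is the analytic core, and I would route it through Berge's maximum theorem. The essential input is that the feasible-set correspondence $u_{-i}\mapsto\Omega_i(u_{-i})$ is continuous — both upper and lower hemicontinuous — with nonempty values on all of $\mathcal{U}_{-i}$. Upper hemicontinuity is immediate from closedness of the constraints, but nonemptiness for every $u_{-i}$ and lower hemicontinuity are the delicate ingredients, since the admissible set of Player $i$ shifts as the others' actions vary. I would establish these by combining the convexity of $\textbf{OP}_i$ with nonemptiness of the global feasible set $\tilde{\Omega}$ from Assumption~\ref{bigassump}(d) together with a Slater-type constraint qualification, so that each slice contains a strictly feasible point depending continuously on $u_{-i}$; Berge's theorem then gives that each $u_{-i}\mapsto BR_i(u_{-i})$ is upper hemicontinuous, and taking the Cartesian product over $i$ preserves this, so $BR$ has a closed graph on $\mathcal{U}$.

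With $\mathcal{U}$ convex and compact and $BR$ nonempty-, convex-, compact-valued with closed graph, Kakutani's theorem yields $u^*\in BR(u^*)$, which is the asserted solution of \eqref{assumpfeq}; by construction $u_i^*$ minimizes $f_i(\cdot,u_{-i}^*)$ over $\Omega_i(u_{-i}^*)$ for every $i\in\mathcal{N}$, which is exactly the defining inequality of a GNE in Definition~\ref{usGNE}. The step I expect to be hardest is precisely the continuity and nonemptiness of the feasible-set correspondence across all $u_{-i}$: unlike unconstrained Nash existence, the coupling of the constraints means one player's admissible set can in principle collapse or jump as rivals move, and ruling this out — ensuring the slice never empties and varies continuously — is what forces the regularity conditions and the collective-awareness assumption into the argument.
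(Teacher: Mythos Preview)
Your Kakutani-plus-Berge argument is correct and is the standard modern route to GNE existence; the paper, by contrast, does not spell out any of these steps and simply invokes Debreu's social-equilibrium existence theorem \cite{debreu1952social}, noting that Debreu's proof rests on the Eilenberg--Montgomery fixed-point theorem \cite{eilenberg1946fixed} rather than Kakutani. In the present convex, finite-dimensional setting the two fixed-point theorems are interchangeable (convex-valued implies contractible-valued), so your approach and the one the paper appeals to are essentially the same argument viewed through different lenses: you unpack the hypotheses Debreu's theorem needs---continuity of the feasible correspondence, convexity and upper hemicontinuity of best responses---whereas the paper treats those as already packaged inside the cited reference. Your closing caveat about nonemptiness and lower hemicontinuity of $u_{-i}\mapsto\Omega_i(u_{-i})$ is well placed; the paper's Assumption~\ref{bigassump} does not explicitly contain a Slater condition, so both your write-up and the paper's citation are implicitly leaning on the same unstated regularity to make the feasible correspondence continuous.
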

\begin{proof}
The proof directly follows from \cite{debreu1952social}, which uses the fixed-point theorem developed in \cite{eilenberg1946fixed}.
\end{proof}





\section{Globally-Aware Constrained Game}\label{GLOBALLY-SHARED CONSTRAINED GAME}
This section studies a special extreme case of the globally-aware constrained games, where the players are aware of all the constraints in the game.
\begin{definition}
In the constrained game $\Xi$, a player with awareness level ${\hat{\mathcal{M}}}$ is said to be globally aware if 
$       \mathfrak{s}({\tilde{\mathcal{M}}})\subseteq\mathfrak{s}({\hat{\mathcal{M}}}),\ \forall\ \tilde{\mathcal{M}}\in2^{\mathcal{M}}.
$
\end{definition}
By definition, if the awareness level of a player is $\mathcal{M}$, then he is globally aware. In a globally-aware constrained game, all the players are globally aware, i.e., each player's awareness level is ${\mathcal{M}}$. The tuple that represents the globally-aware constrained games is given by
\begin{equation}
  \tilde{\Xi}:=( \mathcal{N},\{\mathcal{N}_i,f_i,\mathcal{U}_i,\tilde{\mathcal{F}}_i\}_{i\in\mathcal{N}},\{\hat{\Omega}_m\}_{m\in\mathcal{M}},\mathcal{A},\mathfrak{s}).
\end{equation}
The awareness set in $\tilde{\Xi}$ for each player  $i\in\mathcal{N}$ is
$ \tilde{\mathcal{F}}_i=\{\hat{\Omega}_m\}_{m\in\mathcal{M}}.$

The following corollary directly follows from Theorem \ref{thm1}.

\begin{corollary}\label{existinshared} 
There exists at least one GNE in $\tilde{\Xi}$.
\end{corollary}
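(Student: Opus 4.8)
The plan is to recognize $\tilde{\Xi}$ as a special instance of the general constrained game $\Xi$ and then specialize Theorem~\ref{thm1}. Concretely, the globally-aware game is obtained from $\Xi$ by fixing the awareness profile so that $\tilde{\mathcal{M}}_i=\mathcal{M}$ for every $i\in\mathcal{N}$; equivalently, the awareness set of each player is $\tilde{\mathcal{F}}_i=\{\hat{\Omega}_m\}_{m\in\mathcal{M}}$. Under this profile the neighbor-aggregated feasible set of Player $i$ collapses, since $\bigcup_{j\in\mathcal{N}_i}\tilde{\mathcal{M}}_j=\mathcal{M}$, so that $\Omega_i=\bigcap_{m\in\mathcal{M}}\hat{\Omega}_m=\tilde{\Omega}$ for every $i$. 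Thus each $\textbf{OP}_i$ in $\tilde{\Xi}$ carries the full constraint $g_k(u)\le 0$, $k\in\mathcal{N}$, and the best-response maps $BR_i$ coincide with those appearing in the fixed-point equation \eqref{assumpfeq}.

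Next I would check that the hypotheses of Theorem~\ref{thm1} survive this specialization. The regularity conditions of Assumption~\ref{bigassump} --- continuous differentiability of $f_i$ and $g_i$, convexity and compactness of $\mathcal{U}_i$, nonemptiness of $\tilde{\Omega}$, and convexity of each $\textbf{OP}_i$ --- are stated for the underlying data $(f_i,g_i,\mathcal{U}_i)$ and are therefore unaffected by the choice of awareness profile; in particular the convexity requirement applies to $\textbf{OP}_i$ precisely when every constraint $g_k\le 0$ is imposed, which is exactly the globally-aware case. The Collective Awareness Condition $\bigcup_{i\in\mathcal{N}}\tilde{\mathcal{M}}_i=\mathcal{M}$ holds trivially, indeed with $\tilde{\mathcal{M}}_i=\mathcal{M}$ for each individual $i$.

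With all hypotheses in force, Theorem~\ref{thm1} applies verbatim to $\tilde{\Xi}$: the fixed-point equation \eqref{assumpfeq} admits a solution $u^*\in BR(u^*)$, and any such fixed point is by construction a GNE in the sense of Definition~\ref{usGNE}. I do not expect a genuine obstacle here, as the corollary is a direct specialization of the existence theorem rather than a new argument; the only point meriting a line of care is confirming that collapsing the awareness-dependent feasible sets to the common set $\tilde{\Omega}$ leaves the convexity condition of Assumption~\ref{bigassump} intact, which it does since $\tilde{\Omega}$ is exactly the constraint set underlying that assumption.
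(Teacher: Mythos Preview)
Your proposal is correct and takes essentially the same approach as the paper: the paper simply states that the corollary ``directly follows from Theorem~\ref{thm1},'' and you spell this out by identifying $\tilde{\Xi}$ as the instance of $\Xi$ with $\tilde{\mathcal{M}}_i=\mathcal{M}$ for every $i$ and verifying that Assumption~\ref{bigassump} and the Collective Awareness Condition still hold. One small quibble: Assumption~\ref{bigassump}(e) is actually stated for the locally-aware $\textbf{OP}_i$ (constraints indexed by $j\in\mathcal{N}_i$), not the globally-aware one; the right justification is that adding further convex constraints $g_k\le 0$ preserves convexity of each $\textbf{OP}_i$, rather than that the assumption was already phrased for the full constraint set.
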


\begin{remark}\label{globalawareness1}
We observe that a globally-aware constrained game played on arbitrary network can be considered as a locally-aware constrained game played on a  fully-connected network, where a link exists between two arbitrary players. 
\end{remark}

In the classical constrained game, for example in \cite{pavel2007extension}, the awareness set of Player $i$ is given by the following:
$
    \bar{\mathcal{F}}_i=\left\{\bigcap_{m\in\mathcal{M}}\hat{\Omega}_{m}\right\},\ i\in\mathcal{N},
$
under which each player knows all the constraints. However, he is not aware of the fact that he knows all the constraints.
Then, by definition, we can show that the GNEs in the game defined via $\bar{\Xi}:=( \mathcal{N},\{\mathcal{N}_i,f_i,\mathcal{U}_i,\}_{i\in\mathcal{N}},\{\bigcap_{m\in\mathcal{M}}\hat{\Omega}_m\})$, and $\tilde{\Xi}$ coincide. This observation indicates that the classical constrained game is equivalent to a globally-aware constrained game.

\begin{remark}\label{globalawareness}
Based on Remark~\ref{globalawareness1}, a globally-aware constrained game is equivalent to a locally-aware constrained game played on a fully-connected network. Thus, the locally-aware constrained games can be seen as an extension of the globally-aware constrained game.
\end{remark}

\section{Locally-Aware Constrained Game}\label{LOCALLY-SHARED CONSTRAINED GAME}

In this section, we study the properties of the locally-aware constrained games.

\subsection{Connectivity of Networks, Awareness and GNEs}
Consider an undirected connected network, $\underline{\mathcal{G}}=\left\{\mathcal{N},\underline{\mathcal{E}}\right\}$. If $\mathcal{E}\subseteq\underline{\mathcal{E}}$, we say that $\underline{\mathcal{G}}$ has a stronger connectivity than $\mathcal{G}$, as $\underline{\mathcal{G}}$ contains more links. 
\begin{theorem}\label{awarenessandconnectivity}
As the connectivity of the network on which a locally-aware constrained game is played increases, the set of GNEs of that game becomes larger.
\end{theorem}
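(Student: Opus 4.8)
The plan is to prove the set inclusion directly: writing $\Gamma(\mathcal{G})$ for the set of GNEs of the locally-aware game played on $\mathcal{G}$, I will show that $\mathcal{E}\subseteq\underline{\mathcal{E}}$ implies $\Gamma(\mathcal{G})\subseteq\Gamma(\underline{\mathcal{G}})$, which is precisely the assertion that stronger connectivity yields a larger equilibrium set. Two preliminary observations drive the argument. First, \emph{every} GNE is globally feasible: a GNE is a fixed point of the best-response map (Theorem~\ref{thm1}), so each $u^*_i$ is feasible for \textbf{OP}$_i$, i.e. $u^*_i\in\Omega_i(u^*_{-i})$; since $i\in\mathcal{N}_i$ this forces $g_j(u^*)\le 0$ for all $j\in\mathcal{N}_i$, and taking the union over $i\in\mathcal{N}$ (so that, by the Collective Awareness Condition, the player-indexed constraints exhaust $\{\hat{\Omega}_m\}_{m\in\mathcal{M}}$) yields $u^*\in\tilde{\Omega}$. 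Second, the per-player feasible sets are monotone in connectivity: $\mathcal{E}\subseteq\underline{\mathcal{E}}$ gives $\mathcal{N}_i\subseteq\underline{\mathcal{N}}_i$, so the constraint list of \textbf{OP}$_i$ can only grow and hence $\underline{\Omega}_i(u_{-i})\subseteq\Omega_i(u_{-i})$ for every $u_{-i}$.

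With these in hand I would take an arbitrary $u^*\in\Gamma(\mathcal{G})$ and verify the two defining properties of a GNE on $\underline{\mathcal{G}}$. Feasibility is immediate from the first observation: because $u^*\in\tilde{\Omega}$ it satisfies \emph{all} constraints, in particular those indexed by the larger neighborhood $\underline{\mathcal{N}}_i$, so $u^*_i\in\underline{\Omega}_i(u^*_{-i})$. Optimality transfers by restriction: from $f_i(u^*_i,u^*_{-i})\le f_i(u_i,u^*_{-i})$ holding for all $u_i\in\Omega_i(u^*_{-i})$ and the inclusion $\underline{\Omega}_i(u^*_{-i})\subseteq\Omega_i(u^*_{-i})$, the same inequality holds a fortiori for all $u_i\in\underline{\Omega}_i(u^*_{-i})$. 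Thus $u^*$ meets Definition~\ref{usGNE} on $\underline{\mathcal{G}}$, giving $u^*\in\Gamma(\underline{\mathcal{G}})$ and hence the claimed inclusion.

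The step I expect to carry the weight is the global-feasibility observation, because it dissolves the apparent paradox of the statement: one might expect that imposing more local constraints (a smaller feasible region) should shrink the equilibrium set, yet the opposite holds. The resolution is that an incumbent equilibrium, being globally feasible, survives the tightening of every player's feasible set, while shrinking that set only weakens the best-response condition each player must satisfy (fewer admissible deviations to outperform). Both effects point the same way, so no equilibrium is lost and the GNE set can only grow. The remaining pieces---the neighborhood monotonicity and the restriction-of-a-minimizer argument---are routine once global feasibility is established.
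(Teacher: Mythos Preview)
Your proof is correct and follows essentially the same route as the paper: both arguments hinge on the observation that $\mathcal{N}_i\subseteq\underline{\mathcal{N}}_i$ implies $\underline{\Omega}_i(u_{-i})\subseteq\Omega_i(u_{-i})$, so that a minimizer over the larger feasible set remains a minimizer over the smaller one. The paper phrases this by contradiction and does not explicitly verify feasibility of $u^*$ in $\underline{\Xi}$, whereas your direct argument adds the global-feasibility step via the Collective Awareness Condition; this makes your version slightly more complete, but the underlying idea is identical.
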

\begin{proof}
Consider a locally-aware constrained game on network $\mathcal{G}$ and $\underline{\mathcal{G}}$, where  $\underline{\mathcal{G}}$ has a higher connectivity than $\mathcal{G}$. Denote the sets of neighbors on $\underline{\mathcal{G}}$ by $\left\{\underline{\mathcal{N}}_i\right\}_{i\in\mathcal{N}}$. As there are more links connecting the players, $\mathcal{N}_i\ \subseteq\ \underline{\mathcal{N}}_i,\  i\in\mathcal{N}.$
Similar to \eqref{infoset}, it is clear that $\mathcal{F}_{i}\ \subseteq\ \underline{\mathcal{F}}_{i},\ i\in\mathcal{N}.$
Thus, on the network $\underline{\mathcal{G}}$,  each player $i\in\mathcal{N}$ seeks $u_i\in \mathcal{U}_i$ to solve the following optimization problem:
\begin{equation}\label{awarenessresult}
    \begin{aligned}
     \underline{\textbf{OP}}_i:\quad   \min_{u_i\in\mathcal{U}_i}&\quad f_i(u_i,u_{-i})\\
        \text{s.t.}&\quad  (u_i,u_{-i})\in\Omega_{j},\quad {j}\in\underline{\mathcal{N}}_i.
    \end{aligned}
\end{equation}
It is clear that $ \underline{\Omega}_i(u_{-i}):=\bigcap_{j\in\underline{\mathcal{N}}_i}\Omega_{j}(u_{-i})\subseteq{\Omega}_i(u_{-i})=\bigcap_{j\in\mathcal{N}_i}\Omega_{j}(u_{-i})$, $ {i\in\mathcal{N}}$.
We proceed the rest of the proof by contradiction. Suppose that there exists ${u}^*$ which is a GNE solution to $\Xi$, but not a GNE solution to $\underline{\Xi}$ (here, $\Xi$ and $\underline{\Xi}$ are the game played on networks $\mathcal{G}$ and $\underline{\mathcal{G}}$, respectively). Without loss of generality, suppose that there exists some $\tilde{u}_{i}^*\in \underline{\Omega}_i(u_{-i}^*).$ such that
$
    \begin{aligned}
    f_i(u^*_i,{u}_{-i}^*)> f_i(\tilde{u}_{i}^*,u_{-i}^*).
    \end{aligned}
$
However, as $\underline{\Omega}_i(u_{-i}^*)\subseteq \Omega_i(u_{-i}^*)$,
$
    f_i(u_i^*,u_{-i}^*)\leq f_i(u_i,u_{-i}^*),\ \forall u_i\in\underline{\Omega}_i(u_{-i}^*).
$
By letting $u_i=\tilde{u}_i^*$, we observe that two inequalities contradict each other.
\end{proof}

\begin{remark}\label{remarkconnectivity}
The higher connectivity of the network enhances the player's awareness, as
$
    \mathfrak{s}({\bigcup_{j\in\mathcal{N}_i}\tilde{\mathcal{M}}_{j}})\subseteq\mathfrak{s}({\bigcup_{j\in\underline{\mathcal{N}}_i}\tilde{\mathcal{M}}_{j}}),\  i\in\mathcal{N}.
$
Theorem~\ref{awarenessandconnectivity} states that, as players have more awareness about the game, there could be more equilibrium points as a result.
\end{remark}

\subsection{GNEs in $\Xi$ and GNEs in $\tilde{\Xi}$}
The following corollary presents the relationship between GNEs in $\Xi$ and GNEs in $\tilde{\Xi}$.
\begin{corollary}\label{gneandgne}
If ${u}^*=(u_i^*,{u}_{-i}^*)$ is a GNE solution to $\Xi$, then it is also a GNE solution to $\tilde{\Xi}$.
\end{corollary}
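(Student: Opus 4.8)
The plan is to exploit the observation recorded in Remark~\ref{globalawareness} that $\tilde{\Xi}$ is precisely the locally-aware game played on the fully-connected network, in which every player is a neighbor of every other player. Writing $\underline{\mathcal{G}}$ for this complete graph, we have $\mathcal{E}\subseteq\underline{\mathcal{E}}$ for the edge set of any network $\mathcal{G}$ on which $\Xi$ is played, so $\underline{\mathcal{G}}$ carries the strongest possible connectivity and the corollary is the extremal instance of Theorem~\ref{awarenessandconnectivity}. I would nonetheless make the argument self-contained by tracking the two ingredients a GNE requires, feasibility and the best-response inequality, separately, since only the latter is a purely set-theoretic consequence of containment.

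First I would record the feasible-set containment. In $\tilde{\Xi}$ every player is aware of all constraints, so player $i$'s parameterized feasible set is $\tilde{\Omega}_i(u_{-i})=\{u_i\in\mathcal{U}_i : g_j(u_i,u_{-i})\le 0,\ \forall j\in\mathcal{N}\}$, whereas in $\Xi$ it is $\Omega_i(u_{-i})=\{u_i\in\mathcal{U}_i : g_j(u_i,u_{-i})\le 0,\ j\in\mathcal{N}_i\}$. Since $\mathcal{N}_i\subseteq\mathcal{N}$, we obtain $\tilde{\Omega}_i(u_{-i})\subseteq\Omega_i(u_{-i})$ for every $i\in\mathcal{N}$ and every $u_{-i}$, exactly as in the proof of Theorem~\ref{awarenessandconnectivity}.

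The key step, and the one I expect to be the main obstacle, is to verify that a GNE $u^*$ of $\Xi$ is globally feasible, i.e.\ $u^*\in\tilde{\Omega}$, so that $u_i^*\in\tilde{\Omega}_i(u_{-i}^*)$ and the point is actually admissible in $\tilde{\Xi}$; this does not follow from the containment alone. The idea is that each player's own constraint lies in his awareness set, since $i\in\mathcal{N}_i$, so the feasibility part of the GNE condition for player $i$ already forces $g_i(u^*)\le 0$. Letting $i$ range over $\mathcal{N}$ yields $g_i(u^*)\le 0$ for all $i$, and invoking the Collective Awareness Condition $\bigcup_{i\in\mathcal{N}}\tilde{\mathcal{M}}_i=\mathcal{M}$ shows that the family $\{g_i\le 0\}_{i\in\mathcal{N}}$ collectively encodes all $M$ constraints $\{\hat{\Omega}_m\}_{m\in\mathcal{M}}$. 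Hence $u^*\in\tilde{\Omega}$, which is the decisive point.

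With feasibility in hand, the optimality transfer is immediate: the defining inequality $f_i(u_i^*,u_{-i}^*)\le f_i(u_i,u_{-i}^*)$ holds for all $u_i\in\Omega_i(u_{-i}^*)$, and restricting it to the smaller set $\tilde{\Omega}_i(u_{-i}^*)\subseteq\Omega_i(u_{-i}^*)$ preserves it. Combined with $u_i^*\in\tilde{\Omega}_i(u_{-i}^*)$, this is exactly the requirement of Definition~\ref{usGNE} applied to $\tilde{\Xi}$, so $u^*$ is a GNE of $\tilde{\Xi}$ and the claim follows.
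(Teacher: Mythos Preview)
Your proposal is correct and follows essentially the same route as the paper: the remark immediately following the corollary states that $\tilde{\Xi}$ is the locally-aware game on the complete graph, so the result is the extremal case of Theorem~\ref{awarenessandconnectivity}. Your explicit verification of global feasibility via $i\in\mathcal{N}_i$ and the Collective Awareness Condition is a useful addition, since the paper's proof of Theorem~\ref{awarenessandconnectivity} handles only the optimality inequality and leaves the membership $u_i^*\in\underline{\Omega}_i(u_{-i}^*)$ implicit.
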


\begin{remark}
As mentioned above, a globally-aware constrained game can be considered as a locally-aware constrained game played on a fully-connected network. Hence, the fully-connected network structure enhances the players to the highest possible awareness level (i.e. the global awareness) in $\mathcal{A}$. From this perspective, according to Remark~\ref{remarkconnectivity}, Corollary~\ref{gneandgne} is an immediate result of Theorem~\ref{awarenessandconnectivity}.
\end{remark}

Based on Corollary \ref{gneandgne}, we further obtain the following result.
\begin{corollary}
If there exists a unique GNE solution to $\tilde{\Xi}$, then there also exists a unique GNE solution to ${\Xi}$. Moreover, the GNE solution to $\tilde{\Xi}$ coincides with the GNE solution to the corresponding ${\Xi}$. 
\end{corollary}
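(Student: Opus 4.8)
The plan is to recognize that this statement follows immediately by combining the existence result (Theorem~\ref{thm1}) with the inclusion already established in Corollary~\ref{gneandgne}, so that the GNE set of $\Xi$ gets sandwiched inside a singleton. First I would record the two facts I am allowed to use. By Theorem~\ref{thm1}, the locally-aware game $\Xi$ admits at least one GNE, so its GNE set is nonempty. By Corollary~\ref{gneandgne}, every GNE of $\Xi$ is also a GNE of $\tilde{\Xi}$; in set-theoretic language, the GNE set of $\Xi$ is a subset of the GNE set of $\tilde{\Xi}$.

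Next I would bring in the hypothesis of the corollary: $\tilde{\Xi}$ has a unique GNE, say $\bar{u}$, so that the GNE set of $\tilde{\Xi}$ is precisely the singleton $\{\bar{u}\}$. Combining this with the two observations above, the GNE set of $\Xi$ is a \emph{nonempty} subset of $\{\bar{u}\}$, and the only nonempty subset of a singleton is the singleton itself. Hence $\Xi$ has exactly one GNE and it equals $\bar{u}$, which simultaneously yields the uniqueness claim and the coincidence claim. Equivalently, one can argue the uniqueness directly by contradiction: if $\Xi$ had two distinct GNEs, Corollary~\ref{gneandgne} would map both into the GNE set of $\tilde{\Xi}$, contradicting its uniqueness; existence for $\Xi$ is then supplied by Theorem~\ref{thm1}.

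There is no substantive obstacle in this argument, since it is purely a matter of intersecting nonemptiness with the inclusion and the singleton hypothesis. The only point that merits care is checking that Theorem~\ref{thm1} genuinely applies to $\Xi$ under Assumption~\ref{bigassump}, so that the GNE set of $\Xi$ is truly nonempty; without this, the subset relation alone would leave open the vacuous possibility of an empty GNE set, and the corollary would assert existence without justification. Since Assumption~\ref{bigassump} is in force throughout, this hypothesis is met and the conclusion goes through.
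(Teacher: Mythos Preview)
Your proposal is correct and is essentially the argument the paper has in mind: the corollary is stated immediately after Corollary~\ref{gneandgne} with the phrase ``Based on Corollary~\ref{gneandgne}, we further obtain the following result,'' and your combination of the inclusion from Corollary~\ref{gneandgne} with the existence guaranteed by Theorem~\ref{thm1} is exactly the intended (and only reasonable) justification. Your added remark that Theorem~\ref{thm1} is needed to rule out the vacuous empty-GNE case is a point the paper leaves implicit but is indeed required.
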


\subsection{GNEs in $\Xi$ and VEs in $\tilde{\Xi}$}
In the constrained games, the GNEs in $\tilde{\Xi}$ are not of practical interest in the general cases and have less economic justifications as the concept of GNE can be far too weak to serve as a solution concept. Therefore, the variational equilibrium (VE) is proposed as a refinement of GNE \cite{kulkarni2012variational}. 
\begin{definition}
A tuple $u^*=(u_i^*,u_{-i}^*)$ is said to be a VE of $\tilde{\Xi}$ if $u^*$ satisfies $
   F^{\rm T}(u^*)\cdot(u - u^*)\geq 0,\ \forall\ u\in\tilde{\Omega},$
where $F(u)=[\nabla_{u_1}f^{\rm T}_1(u_1,u_{-1}),\cdots,\nabla_{u_N}f^{\rm T}_N(u_N,u_{-N})]^{\rm T}$.
\end{definition}
\begin{proposition}[\cite{kulkarni2012variational}]
 Under Assumption~\ref{bigassump}, a VE of $\tilde{\Xi}$ is also a GNE of $\tilde{\Xi}$.
\end{proposition}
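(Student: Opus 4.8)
The plan is to show that the variational inequality defining a VE, which is an \emph{aggregate} condition coupling all players at once, decouples into the per-player first-order optimality conditions, and then to use convexity to upgrade those first-order conditions into global optimality for each player. Throughout, recall that in $\tilde{\Xi}$ every player is globally aware, so the feasible set of Player $i$ given $u_{-i}^*$ is $\tilde{\Omega}_i(u_{-i}^*) = \{u_i \in \mathcal{U}_i : g_j(u_i, u_{-i}^*) \leq 0, \ j\in\mathcal{N}\}$, and that by Definition~\ref{usGNE}, $u^*$ is a GNE precisely when $u_i^*$ minimizes $f_i(\cdot, u_{-i}^*)$ over $\tilde{\Omega}_i(u_{-i}^*)$ for each $i$.

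First I would fix a player $i$ and an arbitrary $u_i \in \tilde{\Omega}_i(u_{-i}^*)$, and consider the single-player deviation $\bar{u} := (u_i, u_{-i}^*)$. The crucial observation is that $\bar{u}\in\tilde{\Omega}$: since $u_i \in \tilde{\Omega}_i(u_{-i}^*)$ we have $g_j(u_i, u_{-i}^*)\leq 0$ for all $j\in\mathcal{N}$, which is exactly the membership condition for the global feasible set. This is the step that exploits the separable structure of $\tilde{\Omega}$ under single-coordinate perturbations, and in my view it is the heart of the argument.

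Next I would substitute $u = \bar{u}$ into the VE inequality $F^{\rm T}(u^*)\cdot(u-u^*)\geq 0$. Because $\bar{u}$ and $u^*$ differ only in the $i$-th block, every cross term $\nabla_{u_j}f_j(u^*)^{\rm T}(u_j^* - u_j^*)$ with $j\neq i$ vanishes, and the aggregate inequality collapses to the per-player variational inequality $\nabla_{u_i}f_i(u^*)^{\rm T}(u_i - u_i^*)\geq 0$, valid for every $u_i\in\tilde{\Omega}_i(u_{-i}^*)$.

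Finally I would invoke Assumption~\ref{bigassump}: part (e) guarantees that \textbf{OP}$_i$ is convex for the fixed parameter $u_{-i}^*$, so $f_i(\cdot, u_{-i}^*)$ is convex and $\tilde{\Omega}_i(u_{-i}^*)$ is a convex set. For a convex objective over a convex set, the first-order variational inequality just derived is not merely necessary but sufficient for global optimality; hence $f_i(u_i^*, u_{-i}^*)\leq f_i(u_i, u_{-i}^*)$ for all $u_i\in\tilde{\Omega}_i(u_{-i}^*)$. Since $i$ was arbitrary, this holds for every player, which is exactly the defining property of a GNE of $\tilde{\Xi}$. The only obstacle worth flagging is guaranteeing that the deviation stays feasible (the claim $\bar{u}\in\tilde{\Omega}$); once that is secured, the coordinate-wise decoupling and the convexity step are routine.
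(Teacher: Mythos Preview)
Your argument is correct and self-contained. The paper, however, does not supply its own proof of this proposition at all: it is stated with a citation to \cite{kulkarni2012variational} and left unproved in the text. So there is no ``paper's proof'' to compare against beyond the external reference.

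That said, what you have written is exactly the standard argument one finds in the generalized Nash equilibrium literature (and essentially what the cited reference does): restrict the aggregate variational inequality to single-coordinate perturbations, observe that such perturbations remain in the global feasible set $\tilde{\Omega}$ because membership in $\tilde{\Omega}_i(u_{-i}^*)$ is defined by the \emph{same} constraints $g_j(\cdot,u_{-i}^*)\le 0$ for all $j\in\mathcal{N}$, and then invoke convexity of $f_i(\cdot,u_{-i}^*)$ over the convex slice $\tilde{\Omega}_i(u_{-i}^*)$ to turn the first-order condition into global optimality. Your identification of the feasibility step $\bar{u}\in\tilde{\Omega}$ as the crux is accurate; once that holds, the decoupling and the convexity upgrade are indeed routine. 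Nothing is missing.
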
    
A GNE in $\Xi$ is not necessary a VE in $\tilde{\Xi}$. Consider the following counterexample in a two-player game. Suppose $f_1(u_1,u_2)=\frac{1}{2}u_1^2-u_1$, $f_2(u_1,u_2)=\frac{1}{2}u_2^2-2u_2$, and $\Omega_1=\left\{(u_1,u_2):u_1+u_2\leq 1\right\}$.
The GNE solution to $\Xi$ is $(-1,2)$, and the VE solution to $\tilde{\Xi}$ is $(0,1)$.

\subsection{Characterization of GNEs in $\Xi$}
We characterize GNEs in $\Xi$ and provide the necessary and sufficient conditions.

\begin{lemma}\label{suffice}
Let ${u}^*$ be a feasible solution such that
\begin{equation}\label{auguopt}
    \tilde{f}({u}^*;{u}^*)\leq \tilde{f}(v;{u}^*),\ \ \forall\ v\in \Omega({u}^*), 
\end{equation}
where $
     \tilde{f}(v;u)=\sum_{i\in\mathcal{N}}f_i(v_i,u_{-i}),
$ and $
    \Omega(u)=\prod_{i\in\mathcal{N}}\Omega_i(u_{-i}).
$
Then, ${u}^*$ is a GNE solution to $\Xi$.
\end{lemma}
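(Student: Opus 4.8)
The plan is to exploit the fact that the aggregate objective $\tilde{f}(\cdot\,;u^*)$ is separable across the players' action blocks and that the feasible set $\Omega(u^*)$ is a Cartesian product, so that the single joint minimization in \eqref{auguopt} decouples into $N$ independent per-player minimizations, each of which is precisely the best-response condition appearing in Definition~\ref{usGNE}.

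First I would note that, with the second argument frozen at $u^*$, each summand $f_i(v_i,u_{-i}^*)$ of $\tilde{f}(v;u^*)=\sum_{i\in\mathcal{N}}f_i(v_i,u_{-i}^*)$ depends only on the block $v_i$; likewise, membership $v\in\Omega(u^*)=\prod_{i\in\mathcal{N}}\Omega_i(u_{-i}^*)$ is equivalent to the coordinatewise conditions $v_i\in\Omega_i(u_{-i}^*)$ for every $i$. In particular, since $u^*$ is feasible we have $u_i^*\in\Omega_i(u_{-i}^*)$ for each $i$, hence $u^*\in\Omega(u^*)$, so $u^*$ is itself admissible in \eqref{auguopt}.

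Next I would argue by contradiction. Suppose $u^*$ fails to be a GNE; then there is a player $i$ and an action $\hat{u}_i\in\Omega_i(u_{-i}^*)$ with $f_i(\hat{u}_i,u_{-i}^*)<f_i(u_i^*,u_{-i}^*)$. Form the perturbed profile $v$ by replacing the $i$-th block of $u^*$ with $\hat{u}_i$ while keeping $v_j=u_j^*$ for $j\neq i$. The product structure together with feasibility of $u^*$ guarantees $v\in\Omega(u^*)$: the $i$-th coordinate lies in $\Omega_i(u_{-i}^*)$ by the choice of $\hat{u}_i$, and every other coordinate is unchanged and therefore still feasible. Evaluating the aggregate objective, all terms with $j\neq i$ cancel between $\tilde{f}(v;u^*)$ and $\tilde{f}(u^*;u^*)$, leaving $\tilde{f}(v;u^*)-\tilde{f}(u^*;u^*)=f_i(\hat{u}_i,u_{-i}^*)-f_i(u_i^*,u_{-i}^*)<0$, which contradicts the minimality \eqref{auguopt}.

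The calculation itself is routine; the crux — and the only place where the hypotheses genuinely bite — is that $\Omega(u^*)$ is a \emph{product} set, which is exactly what lets a single-coordinate perturbation of $u^*$ remain feasible and thus turns the joint optimality condition into $N$ separate best-response conditions. I expect this to be the main point to flag, since the identical argument would break down if the players instead shared a coupled feasible region rather than the decoupled $\prod_{i\in\mathcal{N}}\Omega_i(u_{-i}^*)$. To close, I would collect the conclusion: for every $i\in\mathcal{N}$, $f_i(u_i^*,u_{-i}^*)\leq f_i(u_i,u_{-i}^*)$ for all $u_i\in\Omega_i(u_{-i}^*)$, which is precisely Definition~\ref{usGNE}, so $u^*$ is a GNE solution to $\Xi$.
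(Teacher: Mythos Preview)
Your proof is correct and follows essentially the same contradiction argument as the paper: assume some player $i$ has a profitable deviation $\hat{u}_i\in\Omega_i(u_{-i}^*)$, plug the single-coordinate perturbation $(\hat{u}_i,u_{-i}^*)$ into $\tilde{f}(\cdot\,;u^*)$, and obtain a strict decrease contradicting \eqref{auguopt}. The only difference is cosmetic---you make the role of the product structure of $\Omega(u^*)$ more explicit when verifying feasibility of the perturbation, which is a welcome clarification but not a different approach.
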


\begin{proof}
We proceed the proof by contradiction. Suppose that ${u}^*$ satisfies \eqref{auguopt} and ${u}^*$ is not a GNE solution to $\Xi$.  Without loss of generality, suppose that there exists some $\Bar{u}_i\in\Omega_i({u}^*_{-i})$ such that
\begin{equation*}
    \tilde{f}_i({u}_i^*,{u}^*_{-i})> \tilde{f}_i(\Bar{u}_i,{u}^*_{-i}).
\end{equation*}
By adding both sides by $\sum_{j\in\mathcal{N}/\{i\}}f_j({u}_j^*,{u}^*_{-j})$, we have
\begin{equation*}
\begin{aligned}
    \tilde{f}({u}^*;{u}^*)&=f_i({u}_i^*,{u}^*_{-i})+\sum_{j\in\mathcal{N}/\{i\}}f_j({u}_j^*,{u}^*_{-j})\\
    &> f_i(\bar{u}_i,{u}^*_{-i})+\sum_{j\in\mathcal{N}/\{i\}}f_j({u}_j^*,{u}^*_{-j})\\
    &={\tilde{f}}(\Bar{{u}};{u}^*),
\end{aligned}
\end{equation*}
where $\Bar{{u}}=(\Bar{u}_i,{{u}}^*_{-i}).$ This contradicts our assumption that ${u}^*$ satisfies \eqref{auguopt}. 
\end{proof}

\begin{remark}
Note that the conditions in Lemma~\ref{suffice} also suffice to show that any $u^*$ satisfying \eqref{auguopt} is a GNE in $\tilde{\Xi}$ as shown in \cite{pavel2007extension}.
\end{remark}

Define the Lagrangian functions
\begin{equation}\label{lagrange}
    \mathcal{L}_i\left(u_i;u_{-i};{\mu}_i\right):=f_i(u_i,{u}_{-i})+\sum_{j\in\mathcal{N}_i}{\lambda}_{j}{g}_{j}(u_i,{u}_{-i}),
\end{equation}
and the augmented Lagrangian function
\begin{equation}\label{sperate}
    \mathcal{L}(v;{u};{\mu}):=\sum_{i\in\mathcal{N}}\mathcal{L}_i(v_{i};{u}_{-i};{\mu}),
\end{equation}
where $\mu_i:=(\lambda_{j})_{j\in\mathcal{N}_i}$ and $\mu=(\mu_1,\mu_2,...,\mu_N)$.
\begin{remark}
Note that the Lagrangian function \eqref{lagrange} captures the local information of each player $i$. The Lagrangian multiplier, $\mu_i$, $i\in\mathcal{N}$, stands for the awareness level to some extent. We call $\mu_i$ the awareness variable of Player $i$, and the dimension of $\mu_i$ implies the awareness level of Player $i$.
\end{remark}

\begin{theorem}[Necessary and Sufficient Condition for GNEs]\label{neceandsuffic}
 The tuple $u^*=(u^*_i,u_{-i}^*)$ is a GNE in $\Xi$ if and only if 
        \begin{equation}\label{mostimportant}
        u^*=\arg\min_{v\in\mathcal{U} }\left\{\max_{{{\mu}\geq 0}}\mathcal{L}(v;u,{{\mu}})\right\}\bigg|_{v=u}.
    \end{equation}
\end{theorem}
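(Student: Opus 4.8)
The plan is to show that condition \eqref{mostimportant} collapses, after evaluating the inner maximization over the multipliers in closed form, to the variational-type optimality condition \eqref{auguopt} of Lemma~\ref{suffice}; once this reduction is in place, one direction of the equivalence is exactly Lemma~\ref{suffice}, and the other follows from the separable structure of the augmented problem.

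First I would fix $u=u^*$ and study the inner problem $h(v):=\max_{\mu\geq0}\mathcal{L}(v;u^*;\mu)$. Writing out the augmented Lagrangian gives $\mathcal{L}(v;u^*;\mu)=\tilde{f}(v;u^*)+\sum_{i\in\mathcal{N}}\sum_{j\in\mathcal{N}_i}\lambda_{j}\,{g}_{j}(v_i,u^*_{-i})$, which is affine and nondecreasing in each nonnegative multiplier. Hence the maximization splits termwise: if ${g}_{j}(v_i,u^*_{-i})\leq0$ for every $i$ and every $j\in\mathcal{N}_i$, the optimal multipliers are zero and $h(v)=\tilde{f}(v;u^*)$, whereas if some ${g}_{j}(v_i,u^*_{-i})>0$ the corresponding multiplier can be driven to $+\infty$ and $h(v)=+\infty$. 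The feasibility condition ${g}_{j}(v_i,u^*_{-i})\leq0$ for all $j\in\mathcal{N}_i$ and all $i$ is precisely $v\in\Omega(u^*)=\prod_{i\in\mathcal{N}}\Omega_i(u^*_{-i})$. Therefore $\arg\min_{v\in\mathcal{U}}h(v)=\arg\min_{v\in\Omega(u^*)}\tilde{f}(v;u^*)$, and reading \eqref{mostimportant} as the requirement that the resulting best response coincide with the parameter at the fixed point $u=u^*$ shows it is equivalent to requiring that $u^*$ minimize $\tilde{f}(\cdot\,;u^*)$ over $\Omega(u^*)$, i.e.\ to \eqref{auguopt}.

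It then remains to prove the equivalence of \eqref{auguopt} and the GNE property. The direction \eqref{auguopt}$\,\Rightarrow\,$GNE is already Lemma~\ref{suffice}, so nothing new is needed there. For the converse I would exploit separability: because $\tilde{f}(v;u^*)=\sum_{i\in\mathcal{N}}f_i(v_i,u^*_{-i})$ and the feasible set $\Omega(u^*)$ is the Cartesian product $\prod_{i\in\mathcal{N}}\Omega_i(u^*_{-i})$, minimizing $\tilde{f}(\cdot\,;u^*)$ over $\Omega(u^*)$ decouples into the $N$ independent problems $\min_{v_i\in\Omega_i(u^*_{-i})}f_i(v_i,u^*_{-i})$. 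If $u^*$ is a GNE, Definition~\ref{usGNE} gives $f_i(u^*_i,u^*_{-i})\leq f_i(v_i,u^*_{-i})$ for every $v_i\in\Omega_i(u^*_{-i})$; summing these $N$ inequalities yields \eqref{auguopt}. Chaining the two reductions establishes the stated if-and-only-if.

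The hard part is not any single estimate but getting the inner-max characterization and the fixed-point notation right. The key observation is that the stacking $\mu=(\mu_1,\dots,\mu_N)$ assigns each aware constraint its own independent nonnegative multiplier, so that an infeasible $v$ forces exactly one summand of $\mathcal{L}$ to $+\infty$ and the penalty genuinely blows up; this is what makes $\max_{\mu\geq0}$ behave as the indicator of $\Omega(u^*)$. One also has to read $\arg\min_{v}\{\cdots\}\big|_{v=u}$ correctly as a best-response fixed point: the argmin is computed as a function of the parameter $u$ entering through $u_{-i}$, and the condition demands that this best response coincide with the parameter at $u=u^*$. With these two points settled, the convexity and compactness in Assumption~\ref{bigassump} guarantee the inner minimizer is attained and the remaining steps are routine.
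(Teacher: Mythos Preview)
Your proposal is correct and follows essentially the same approach as the paper: both arguments exploit separability to show the inner maximization over $\mu\geq 0$ acts as the indicator of $\Omega(u^*)$, then invoke Lemma~\ref{suffice} for the sufficiency direction. The only organizational difference is that you first establish the two-way equivalence \eqref{mostimportant}$\Leftrightarrow$\eqref{auguopt} and then \eqref{auguopt}$\Leftrightarrow$GNE (the latter converse via separability of $\tilde f$ and the product structure of $\Omega(u^*)$), whereas the paper treats necessity by directly recasting each player's constrained problem as a min-max and stacking them; the underlying ideas and ingredients are identical.
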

\begin{proof}
We first show the necessity. If $u^*$ is a GNE solution, then $u^*$ is feasible, i.e., $u_i^*\in\Omega_i(u^*_{-i})$, $\forall i\in\mathcal{N}$. By the definition of GNE, it is equivalent to solving the following $N$ intertwined optimizations: with fixed $u=(u_i,u_{-i})$, for each $i\in\mathcal{N}$,
    \begin{equation}\label{indi}
        \begin{aligned}
        \gamma_i^*(u)=\arg\min_{v_i\in\mathcal{U}_i}\max_{{{\mu}}_i\geq 0}\ f_i(v_i,u_{-i})+{{\mu}}_i{g}_i(v_i,u_{-i}).
        \end{aligned}
    \end{equation}
    As we assume that \textbf{OP}$_i$ is a convex optimization problem, there always exists a solution to \eqref{indi}. To obtain GNE, we need to solve the following $N$ fixed-point equations 
        $\gamma_i^*(u)\ =\ u_i,\  i\in\mathcal{N}.$
    Thus, by collecting $N$ fixed-point equations, we can write down the necessary condition equivalently and more compactly as
    \begin{equation*}
        u^*=\arg\min_{v\in\mathcal{U} }\left\{\max_{\mu\geq 0}{\mathcal{L}}(v;u,{{\mu}})\right\}\bigg|_{v=u}.
    \end{equation*}
   We next show the sufficiency. By the definition of $\mathcal{L}$ in \eqref{sperate},
    we can rewrite the inner maximization in the sufficient condition \eqref{mostimportant} as
    \begin{equation}\label{innermax}
    \begin{aligned}
        \max_{{\mu}\geq 0}\mathcal{L}(v;{u};{\mu})&= \max_{{\mu}\geq 0}\sum_{i\in\mathcal{N}}\mathcal{L}_i(v_{i},{u}_{-i},{\mu}_i)\\
        &=\sum_{i\in\mathcal{N}}\max_{{\mu}_i\geq 0}\mathcal{L}_i(v_{i},{u}_{-i},{\mu}_i).
    \end{aligned}
    \end{equation}
    Denote the maximizer of \eqref{innermax} by $\zeta_i^*(v_i,{u}_{-i})$, $i\in\mathcal{N}$. In order to minimize $\mathcal{L}_i(v_i,{u}_{-i},\zeta_i^*(v_i,{u}_{-i}))$, $v_i$ must be chosen such that ${g}_i(v_i,{u}_{-i})\leq {0}$. Otherwise, if ${g}_i(v_i,{u}_{-i})> {0}$ then by choosing a functional such that $\zeta_i^*(v_i,{u}_{-i})>0$, the minimax value, $\min_{v_i\in\mathcal{U}_i}\max_{{\mu}_i\geq 0}\mathcal{L}_i(v_{i},{u}_{-i},{\mu}_i)$, will go to infinity. Then, we solve for the fixed-point equation
    \begin{equation*}
        {u}^*=\arg\min_{{v}\in\mathcal{U} }\left\{\sum_{i\in\mathcal{N}}\mathcal{L}_i(v_i;{u}_{-i};\zeta_i^*(v_i,u_{-i})\right\}\bigg|_{v=u}.
    \end{equation*}
    Therefore, $\zeta^{*\rm T}_i(u^*)\mathbf{g}_i(u^*)=0.$ Moreover, since ${u}^*$ optimizes $L(v;u;\left\{\zeta^*_i(v_i,{u}_{-i})\right\}_{i\in\mathcal{N}})$ as a fixed point
    \begin{equation*}
        \begin{aligned}
        &\mathcal{L}({u}^*;{u}^*;\left\{\zeta^*_i(u^*_i,{u}^*_{-i})\right\}_{i\in\mathcal{N}})\\
        &\qquad\qquad\qquad\quad\leq \mathcal{L}(v;{u}^*;\left\{\zeta^*_i(u^*_i,{u}^*_{-i})\right\}_{i\in\mathcal{N}}).
    \end{aligned}
    \end{equation*}
    That is, $\tilde{f}({u}^*;{u}^*)\leq  \tilde{f}(v;{u}^*),\ \forall v\in \Omega({u}^*).$ By Lemma~\ref{suffice}, the theorem follows.
\end{proof}

\subsection{Dual Game}\label{dualgame}
Under the regularity condition, we can convert the constrained game defined by $\Xi$ to two unconstrained games. To this end, it is sufficient to show that \eqref{mostimportant} is equivalent to the following solution that involves four steps.
\begin{enumerate}
    \item[(i)] Let $T_i({u}_{-i},{\mu}_i)$, $i\in\mathcal{N}$, be the minimizing solution to the following Player $i$'s optimization problem for a fixed $\mu_i$ and $u_{-i}$:
   \begin{equation}\label{mini3}
    \min_{v_i\in\mathcal{U}_i}\ \mathcal{L}_i\left(v_i;{u}_{-i};{\mu}_i\right),
    \quad i\in\mathcal{N}.
    \end{equation}
    \item[(ii)] For a fixed $\{\mu_i\}_{i\in\mathcal{N}}$, compute the fixed points generated by the collection of mappings $\{T_i\}_{i\in\mathcal{N}}$ that maps from $\prod_{i=1}^N\mathcal{U}_i$ to itself.
    For a given $\mu=(\mu_1, \cdots,\mu_N) \in \prod_{i=1}^N\mathbb{R}_+^{|\tilde{\mathcal{M}}_i|}$, denote the fixed points by
    ${V}({\mu}):=({V}_i({\mu}))_{i\in\mathcal{N}}$, where $V_i: \prod_{i=1}^N\mathbb{R}_+^{|\tilde{\mathcal{M}}_i|} \rightarrow \prod_{j\neq i, j\in\mathcal{N}}\mathcal{U}_j$.
    \item[(iii)]  For a given $u_{-i}$, let $K_i({u}_{-i})$, $i\in\mathcal{N}$, be the maximizing solution to the following problem of Player $i$. 
      \begin{equation}\label{max3}
         \max_{{\mu}_i\geq 0}\min_{v_i\in\mathcal{U}_i}\ \mathcal{L}_i\left(v_i;{u}_{-i};{\mu}_i\right),
    \quad i\in\mathcal{N}.
    \end{equation}
   Here $K_i:\prod_{j\neq i, j\in\mathcal{N}}\mathcal{U}_j\rightarrow \mathbb{R}_+^{|\tilde{\mathcal{M}}_i|}$ is a map from the actions of the other players to the Lagrangian multiplier associated with Player $i$. 
    \item[(iv)] Compute the fixed points generated by the collection of the composed mappings $\{K_i \circ V_i\}_{i\in\mathcal{N}}$ that maps from $\prod_{i=1}^N\mathbb{R}_+^{|\tilde{\mathcal{M}}_i|}$ to itself.
    Denote the resulting fixed points  by ${\mu}^*=({\mu}_i^*)_{i\in\mathcal{N}}$. Then, the equilibrium solution can be obtained by $    u^*=({u}_i^*)_{i\in\mathcal{N}}=({V}_i({\mu}^*))_{i\in\mathcal{N}}.$
\end{enumerate}

Note that each of the fixed-point equations in (ii) and (iv) corresponds to an unconstrained game, respectively. In the first unconstrained game, with fixed Lagrangian multipliers, the players play an unconstrained game using the same action as in $\Xi$. In the second unconstrained game, the decision variables of the players are the Lagrangian multipliers.

\begin{theorem}
Under Assumption 1, \eqref{mostimportant} and (i) - (iv) generate the same fixed points.
\end{theorem}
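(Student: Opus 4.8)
The plan is to prove that the two characterizations produce identical solution sets by establishing set containment in both directions, using strong duality for each player's convex subproblem as the bridge between the joint minimax fixed point \eqref{mostimportant} and the nested four-step computation. The core identity I will exploit is that the joint minimax over $(v,\mu)$ separates into $N$ independent per-player saddle-point problems, and that strong duality lets me interchange the inner minimization over $v_i$ and the outer maximization over $\mu_i$, thereby matching the order in which steps (i)--(iv) perform these operations.

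First I would reduce \eqref{mostimportant} using the separability already recorded in the proof of Theorem~\ref{neceandsuffic}. Since $\mathcal{L}(v;u;\mu)=\sum_{i\in\mathcal{N}}\mathcal{L}_i(v_i;u_{-i};\mu_i)$ and each summand involves only the own block $v_i$ as a minimization variable (with $u_{-i}$ fixed), both the inner maximization and the outer minimization decouple across players, so \eqref{mostimportant} is equivalent to the system: for every $i\in\mathcal{N}$, $u_i^*=\arg\min_{v_i\in\mathcal{U}_i}\max_{\mu_i\geq 0}\mathcal{L}_i(v_i;u_{-i}^*;\mu_i)$. This is exactly a per-player primal (minimax) problem parameterized by the equilibrium actions $u_{-i}^*$ of the others.

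Next I would invoke strong duality. By Assumption~\ref{bigassump}, each \textbf{OP}$_i$ is convex and $\tilde{\Omega}$ is nonempty, so under the attendant constraint qualification each parameterized subproblem attains a finite saddle point, giving $\min_{v_i}\max_{\mu_i\geq 0}\mathcal{L}_i=\max_{\mu_i\geq 0}\min_{v_i}\mathcal{L}_i$ with an optimal pair $(u_i^*,\mu_i^*)$. The saddle-point property yields precisely $u_i^*=\arg\min_{v_i}\mathcal{L}_i(v_i;u_{-i}^*;\mu_i^*)=T_i(u_{-i}^*,\mu_i^*)$ and $\mu_i^*=\arg\max_{\mu_i\geq 0}\min_{v_i}\mathcal{L}_i(v_i;u_{-i}^*;\mu_i)=K_i(u_{-i}^*)$, which are exactly the maps defined in steps (i) and (iii). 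To close the forward containment, the relations $u_i^*=T_i(u_{-i}^*,\mu_i^*)$ identify $u^*$ as a fixed point $V(\mu^*)$ of step (ii), while $\mu_i^*=K_i(u_{-i}^*)$ identify $\mu^*$ as a fixed point of $\{K_i\circ V_i\}_{i\in\mathcal{N}}$ in step (iv); thus $u^*=V(\mu^*)$ is produced by (i)--(iv). For the reverse containment I would run the same equivalences backwards: starting from $u^*=V(\mu^*)$ with $\mu^*$ a fixed point of $\{K_i\circ V_i\}$, the defining optimality of $T_i$ and $K_i$ together with strong duality reassemble the per-player saddle points, and re-summing via separability recovers \eqref{mostimportant}.

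The hard part will be the justification of strong duality rather than the bookkeeping of the fixed points. Assumption~\ref{bigassump} explicitly supplies only convexity of each \textbf{OP}$_i$ and nonemptiness of $\tilde{\Omega}$, so I would need to argue that each parameterized subproblem admits a finite optimal multiplier and attains its saddle value---most cleanly by strengthening the feasibility hypothesis to a Slater-type strict feasibility of the local constraints $\{g_j\}_{j\in\mathcal{N}_i}$ at the relevant $u_{-i}^*$. A secondary subtlety is that \eqref{mostimportant} takes a single joint fixed point in $u$, whereas (i)--(iv) take an inner fixed point in the actions followed by an outer fixed point in the multipliers; the saddle-point bridge reconciles the two orderings, but when the inner equilibria $V(\mu)$ or the dual best responses $K_i$ are set-valued one must track the correspondence at the level of solution sets rather than single points to ensure that no spurious or missing solutions are introduced.
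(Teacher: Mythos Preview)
Your proposal is correct and follows essentially the same approach as the paper: both arguments first use separability of $\mathcal{L}$ to decouple \eqref{mostimportant} into $N$ per-player saddle problems, then invoke strong duality under Assumption~\ref{bigassump} to interchange the min and max, and finally perform the two-way bookkeeping showing that the composite fixed-point relation $T_i(K_i(u_{-i}),u_{-i})=u_i$ is equivalent to the nested steps (i)--(iv). Your write-up is in fact more explicit than the paper's about the saddle-point bridge and the potential need for a Slater-type qualification, which the paper subsumes in the phrase ``no duality gap.''
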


\begin{proof}
On closer inspection, we notice that \eqref{mostimportant} can be decomposed into a maximization, a minimization and a set of fixed point equations. Under Assumption~\ref{bigassump}, as there is no duality gap in the minimax problem \cite{boyd2004convex}, we can interchange the minimization and the maximization as follows:
\begin{enumerate}
        \item \textbf{Minimization}
        \begin{equation}\label{step1}
         \min_{v_i}\ \mathcal{L}_i\left(v_i;{u}_{-i};{\mu}_i\right),
    \quad i\in\mathcal{N};
    \end{equation}
    \item \textbf{Maximization}
        \begin{equation}\label{step2}
         \max_{{\mu}_i}\min_{v_i}\ \mathcal{L}_i\left(v_i;{u}_{-i};{\mu}_i\right),
    \quad i\in\mathcal{N};
    \end{equation}
    \item \textbf{Fixed Point Equation}$\ $\\ Solve a set of fixed point equations:
    \begin{equation}\label{fixedpointeq}
         T_i(K_i({u}_{-i}),{u}_{-i})=u_i\quad, i\in\mathcal{N}.
    \end{equation}
\end{enumerate}

For any ${u}^*$ obtained from (i) - (iv), we only need to show that
\begin{equation*}
    T_i(K_i({u}_{-i}^*),{u}_{-i}^*)=u_i^*,\quad i\in\mathcal{N}.
\end{equation*}
From (ii), we know that  
\begin{equation}\label{dualineq1}
K_i({u}_{-i}^*)={\mu}_i^*,\quad i\in\mathcal{N}.
\end{equation}
And from (iv),
\begin{equation}\label{dualineq2}
    T_i({u}_{-i}^*,{\mu}^*_i)=u_i^*,
\end{equation}
By \eqref{dualineq1} and \eqref{dualineq2},
\begin{equation*}
\begin{aligned}
    T_i({u}_{-i}^*,{\mu}^*_i)= T_i({u}_{-i}^*,K_i({u}_{-i}^*))=u_i^*,\quad i\in\mathcal{N}.
\end{aligned}
\end{equation*}

To prove the other direction. Given ${u}^*$ satisfying,
\begin{equation*}
        T_i({u}_{-i}^*,K_i({u}_{-i}^*))=u_i^*,\quad i\in\mathcal{N},
\end{equation*}
we need to show that it also satisfies {Step 4$^\prime$}. Let
\begin{equation*}
    {\mu}^*_i=K_i({u}_{-i}^*),\quad i\in\mathcal{N}.
\end{equation*}
Then apparently, 
    \begin{equation*}
    T_i({u}^*_{-i},{\mu}^*_i)=u^*_i,\ i\in\mathcal{N}
    \quad \Rightarrow\quad
    {V}({\mu}^*)=({V}_i({\mu}^*))_{i\in\mathcal{N}}.
    \end{equation*}
And 
\begin{equation*}
    K_i\left({V}_{-i}({\mu}^*)\right)={\mu}_i^*\quad i\in\mathcal{N},
\end{equation*}
where ${V}_{-i}({\mu}^*)=({V}_{j}({\mu}^*))_{j\in\mathcal{N}\setminus\{i\}}$. This completes the proof.
\end{proof}

\begin{remark}
The equivalence between \eqref{mostimportant} and (i) - (iv) has an important implication.  We can reformulate a locally-aware constrained game as a two-layer unconstrained game. This result is in contrast to the results in \cite{pavel2007extension} and \cite{bacsar1998dynamic}, where the authors have shown that a globally-aware constrained game can be decomposed into a lower-level modified Nash game with no coupled constraints and a higher-level optimization problem.
\end{remark}

\begin{remark}
In the game induced by the fixed-point equation in (iv), the action is the awareness variable (Lagrange multiplier). After solving for the equilibrium, we can separate the constraints into two classes: inactive constraints and active constraints. If Player $i$ is more aware and has an additional Lagrange multiplier corresponding to an inactive constraint, then the equilibrium set stays the same as the inactive constraint do not affect the equilibria. If Player $i$ has an additional Lagrange multiplier corresponding to an active constraints $\hat{\Omega}_m$, then
\begin{equation*}
    \left\{\left(BR_i(u_{-i}),u_{-i}\right)\right\}\bigcap \hat{\Omega}_m\subseteq \left\{\left(\widetilde{BR}_i(u_{-i}),u_{-i}\right)\right\}\bigcap \hat{\Omega}_m.
\end{equation*}
where $\widetilde{BR}_i$ is the best response with an additional Lagrange multiplier. In a nutshell, solving for GNEs in the new game formed by introducing an additional Lagrange multiplier using Theorem~\ref{neceandsuffic} is a necessary condition for the original game. This result coincides with the result of Theorem~\ref{awarenessandconnectivity}.
\end{remark}

To see the relationship between this result and VEs in the globally-aware constrained game, we consider the following case, where each player has the same constraints, i.e.,
$g(u_i,u_{-i}):=g_i(u_i,u_{-i}),\ {i\in\mathcal{N}}.$
If we impose the following condition $
    \mu_i\ =\ \mu_{j},\ \forall i,j\in\mathcal{N},$
in addition to (i) - (iv), we notice that they give rise to a necessary condition to Theorem 3 in \cite{pavel2007extension}, which has characterized the VEs in a globally-aware constrained game. 
\begin{remark}
In \cite{pavel2007extension}, the players share the same Lagrange multipliers. The interpretation of players sharing the same  Lagrange multipliers is the following. If the players share the same Lagrange multipliers, then they have the global awareness about the constraint and this awareness is common knowledge. Using the common knowledge of the global awareness, the players know that they could potentially select a better equilibrium from GNEs. They agree to pose an additional variational inequality leading to the VE.
\end{remark}

\section{Case Studies}\label{Application}
In this section, we present two case studies to corroborate the  analytical results.
\subsection{Cournot Game and Bertrand Game}
The Cournot Games \cite{tirole1988theory}, where players do not have access to the other player's constraints, are a special case of the locally-aware constrained games. Let $u_i$ denote the production level of firm $i$, $i\in\mathcal{N}=\{1,2\}$. Denote the price of the commodity by $p$. In addition, we assume a linear structure of the market demand curve,
$p = \alpha - \beta\sum_{i\in\mathcal{N}}u_i$, where $\alpha  \in \mathbb{R}_+$ and $\beta \in \mathbb{R}_+$ are publicly known positive constants.

Assume that both firms have a quadratic production cost and different market demands. Then, the Cournot game between two firms can be formulated as follows: for each $i\in\mathcal{N}$,
\begin{align*}
\text{Firm $i$}: \quad\max_{u_i\geq 0}&\quad -k_i u_i^2 +p u_i \\
\text{s.t.} &\quad u_1+u_2\geq q_i,
\end{align*}
 where $q_i \in \mathbb{R}_+$ and $k_i  \in \mathbb{R}_+$ are positive constants. 
The Bertrand games \cite{tirole1988theory} are a class of games where firms choose prices to compete in a market. We denote by $p_1 \in \mathbb{R}_+$ and $p_2 \in \mathbb{R}_+$ the price of product $1$ and product $2$, respectively. The demands for firm $1$ and firm $2$ are 
$
u_1 = \delta - \eta(p_1 - p_2),\
u_2 = \delta - \eta(p_2 - p_1),
$ where $\delta$ and $\eta$ are positive constants.
The cost function for firm $i$, $i\in\mathcal{N}=\{1,2\}$, is
\begin{equation*}
    \begin{aligned}
        d_i(p_i,p_{-i}) = c_i\left( \delta - \eta(p_i - p_{-i}) \right)^2 - \left(\delta - \eta(p_i - p_{-i})\right)p_{i},
\end{aligned}
\end{equation*}
where $c_i$ is the marginal cost. Then, the Bertrand game can be formulated as follows: for $i\in\mathcal{N}$,
\begin{equation}\label{bertrand}
    \begin{aligned}
       \text{Firm }i: \quad \min_{p_i\geq 0}\ d_i(p_i,p_{-i}).
\end{aligned}
\end{equation}
Note that BG is an unconstrained quadratic pricing game. 

\subsubsection{Dual Cournot Game (DCG)}
Following Step (i) - (iv), we observe that the fixed-point equation in (iv) corresponds to the following unconstrained game: for $i\in\mathcal{N}$,
\begin{equation}\label{dual_cournot}
    \text{Firm }i: \quad \min_{\mu_i\geq0}\quad \frac{1}{2}A_i\mu_i^2 -B_i\mu_1\mu_2-C_i\mu_i 
+ F_i(\mu_{-i}),
\end{equation}
where
\begin{equation*}
\begin{aligned}
    A_i&= 1+\frac{\beta(2k_1+\beta)}{H}, \\
    B_i&=\frac{2(k_1+\beta)(2k_1+\beta)}{H},\\
    C_i&=\frac{2\alpha(k_1+\beta)(2k_1+\beta)}{H}  + {2(k_1+\beta)}q_1-\alpha,\\
    H&=4\prod_{i=1}^2(k_i+\beta)-\beta^2,
\end{aligned}
\end{equation*}
and $F(\mu_{-i})$ is a function which does not influence the optimization result or the equilibrium. 

\subsubsection{Equivalence Between DCG and BG}
For given $\alpha$, $\beta$, and $k_i$, $i\in\mathcal{N}$, in the Cournot game, and by matching the coefficients in \eqref{dual_cournot} and \eqref{bertrand} in the following way: for $i\in\mathcal{N}$,
\begin{equation}\label{matching}
\begin{aligned}
    \eta+c_i\eta^2&=\frac{1}{2}\left( 1+\frac{\beta(2k_i+\beta)}{H} \right),  \\
\eta + 2c_i \eta^2&=\frac{2(k_i+\beta)(2k_i+\beta)}{H}, \\
\delta + 2c_i \delta\eta&=\frac{2\alpha(k_i+\beta)(2k_i+\beta)}{H}  + {2(k_i+\beta)}q_i-\alpha .
\end{aligned}
\end{equation}
So, we can obtain parameters $\delta$, $\eta$, $c_1$, and $c_2$ for the Bertrand game. After algebraic simplification, we obtain that for $i\in\mathcal{N}$,
\begin{equation*}
\begin{aligned}
    \eta = 1 - \frac{(2k_i+\beta)^2}{H},
    \end{aligned}
\end{equation*}
and
\begin{equation*}
\begin{aligned}
    c_i = \frac{H\left((2k_i+\beta)(4k_i+3\beta)-H\right)}{2\left(H-(2k_i+\beta)^2\right)^2}.
\end{aligned}
\end{equation*}
Therefore, solving DCG \eqref{dual_cournot} is equivalent to solving the Bertrand game \eqref{bertrand} with parameters specified in \eqref{matching}, and we have $\mu_i^* = p_i^*$, $i\in\mathcal{N}$. The advantage of this equivalence lies in the fact that solving one game automatically gives the results of the other game. In addition, the equivalent dual game (BG) can be easier to solve comparing with the primal game (CG).


\subsection{Linear-Quadratic Games}
\subsubsection{Unique GNE Case}
Consider a linear-quadratic game with linear constraints. For each player $i\in\mathcal{N}=\{1,2\}$, the cost function is $f_i(u_i,u_{-i})=\alpha u_i-\frac{1}{2}u^2_i+\phi u_{-i}u_i$, $u_i\in\mathcal{U}_i=\mathbb{R}$, and the constraints are $\hat{\Omega}_1=\{(u_1,u_2):u_1-c_1u_{2}-d_1\leq 0\}$ and  $\hat{\Omega}_2=\{(u_1,u_2):u_2-c_2u_{1}-d_2\leq 0\}$. 
We assume that $\alpha> 0$ and $\phi> 1$. Furthermore, we assume that $c_i> \phi$, $i\in\mathcal{N}$. As shown in Fig. \ref{UniqueGNE}, when there is only one GNE in $\tilde{\Xi}$, then under different awareness level structures, the corresponding ${\Xi}$ always admits the same GNE as $\tilde{\Xi}$.

\begin{figure}[!tp]
  \includegraphics[clip,width=1\columnwidth]{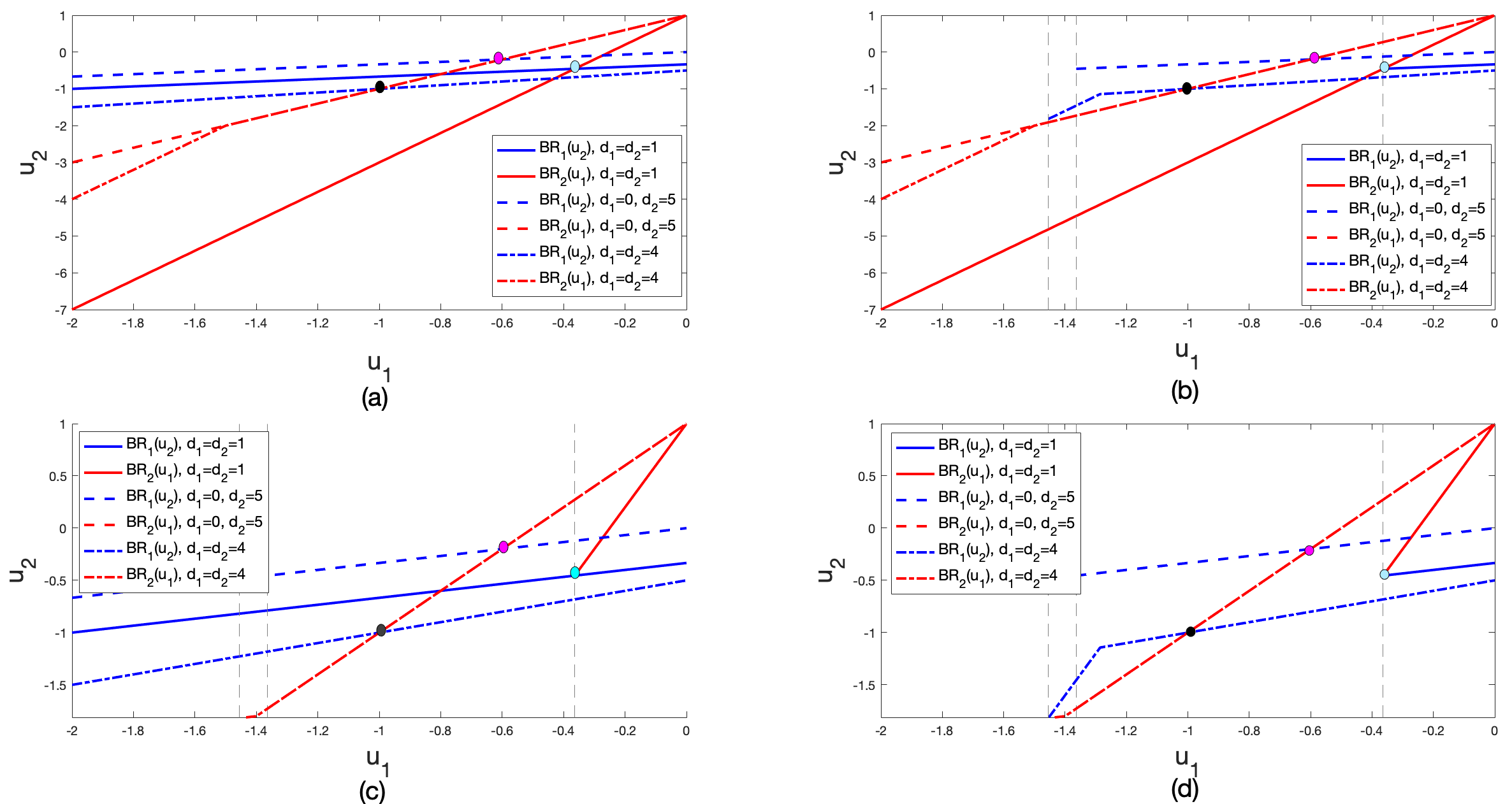}
\caption{An example of unique GNE in $\tilde{\Xi}$, where $\alpha=1$, $\phi=2$, $c_1=3$, and $c_2=4$: (a) depicts the case where Player $1$ is aware of $\hat{\Omega}_1$, and Player $2$ is aware of $\hat{\Omega}_2$; (b) depicts the case where Player $1$ is aware of $\hat{\Omega}_1$ and $\hat{\Omega}_2$, and Player $2$ is aware of $\hat{\Omega}_2$; (c) depicts the case where Player $1$ is aware of $\hat{\Omega}_1$, and Player $2$ is aware of $\hat{\Omega}_1$ and $\hat{\Omega}_2$; (d) depicts the case where both Player $1$ and Player $2$ are aware of $\hat{\Omega}_1$ and $\hat{\Omega}_2$. }\label{UniqueGNE}
\end{figure}

\subsubsection{Non-Unique GNE Case}

When these are multiple GNE solutions to $\tilde{\Xi}$, consider the following case: $f_1(u_1,u_2)=u_1^2+2u_1u_2+2u_2^2$, $f_2(u_1,u_2)=\ u_1^2+2u_1u_2+2u_2^2$, $\mathcal{U}_1=\mathcal{U}_2=\mathbb{R}$, $\hat{\Omega}_1=\{(u_1,u_2):u_1+u_2\leq 3\}$, and $\hat{\Omega}_2=\{(u_1,u_2):2u_1+u_2\leq -5\}$.
As shown in Fig. \ref{NonUniqueGNE}, as the awareness levels of the players increase, the set of GNEs becomes larger. Moreover, the VE is $(-2,-1)$, which coincides with one of the GNE. This is because when $(u^*_1,u^*_2)=(-2,-1)$, the corresponding Lagrange multipliers $\mu^*_1=\mu^*_2=2\geq 0$.

\begin{figure}[htp]
  \includegraphics[clip,width=1\columnwidth]{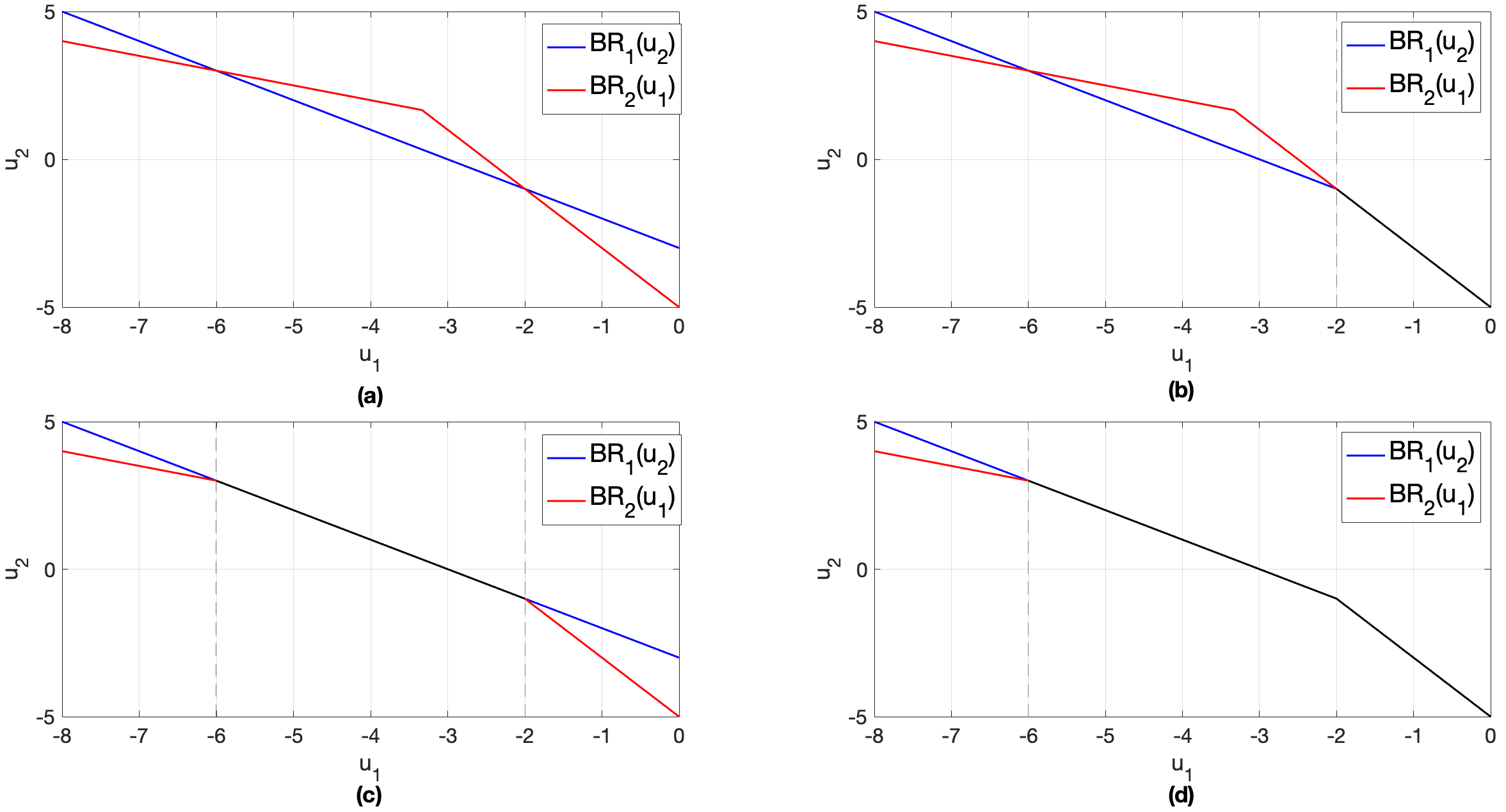}
\caption{An example of multiple GNEs in $\tilde{\Xi}$. The intersection of best response curves or the black line stands for the GNEs. The players' awareness levels in each case are the same as the corresponding ones described in Fig. 1.
}\label{NonUniqueGNE}
\end{figure}

 {
\subsubsection{An Application to Energy System} We next apply the developed framework to an interconnected distributed generation scenario. Specifically, we consider two players in the game and each determines the amount of power to generate. For each player $i\in\mathcal{N}=\{1,2\}$, the generation cost function is $f_i(u_i)= \alpha_i u_i+\beta_i u^2_i$, $u_i\in\mathcal{U}_i=\mathbb{R}_+$, and the constraints are $\hat{\Omega}_1=\{(u_1,u_2):u_1+u_2\leq p_1\}$ and  $\hat{\Omega}_2=\{(u_1,u_2):u_1+u_2\geq p_2\}$, where $p_1\geq p_2$. Note that the constraints $\hat{\Omega}_1$ and $\hat{\Omega}_2$ can be interpreted as the capacity of the two connected generation systems and the load to satisfy, respectively. The results under various awareness structures are shown in Fig. \ref{powersysm}, where the parameters are chosen as $\alpha_1=\alpha_2=0\$/\mathrm{MW}$, $\beta_1=\beta_2=1\$/\mathrm{MW}^2$, $p_1=2$MW, and $p_2=1$MW. An observation is that when player 2 is only aware of the demand constraint $\hat{\Omega}_2$ (Fig. \ref{powersysm}(a) and Fig. \ref{powersysm}(c)), then the game has a unique GNE at which player 1 does not generate energy and player 2's generation is 1MW. In comparison, when player 2 is aware of both constraints $\hat{\Omega}_1$ and $\hat{\Omega}_2$, the game admits multiple GNE (Fig. \ref{powersysm}(b) and Fig. \ref{powersysm}(d)) at which both players generate power with a total 1MW to meet the demand. By comparing Fig. \ref{powersysm}(a) with  Fig. \ref{powersysm}(b) (Fig. \ref{powersysm}(c) with Fig. \ref{powersysm}(d)), it again confirms that the set of GNEs becomes larger as player's awareness levels increases.}

\begin{figure}[htp]
  \includegraphics[clip,width=1\columnwidth]{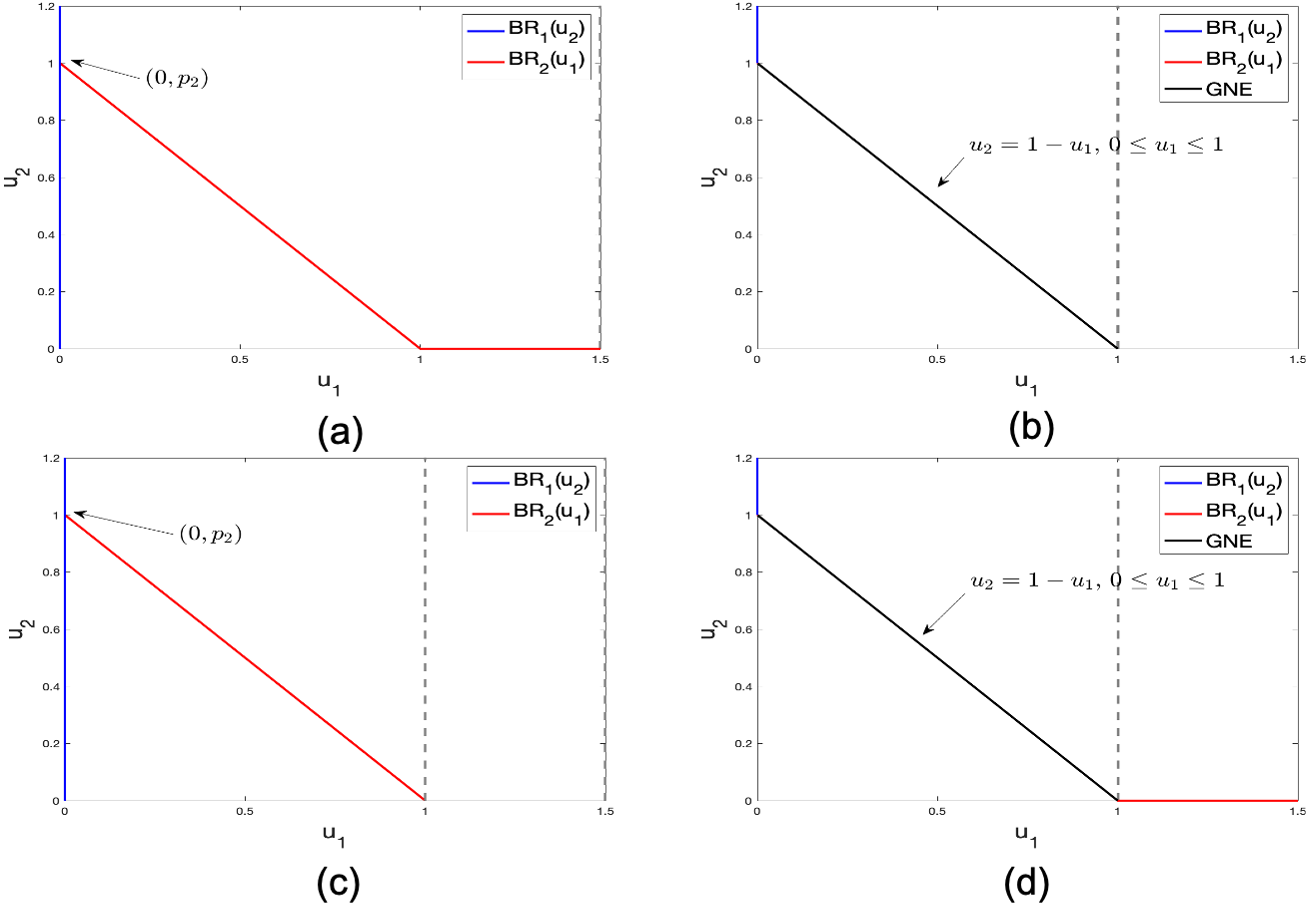}
\caption{ {Various GNEs under different awareness level of players in $\tilde{\Xi}$. The intersection of best response curves or the black line stands for the GNEs. The players' awareness levels in each case (a)-(d) are the same as the corresponding ones described in Fig. 3.  
} 
}\label{powersysm}
\end{figure}

\section{Conclusions and Future Work}\label{CONCLUSIONS AND FUTURE WORK}

In this work, we have studied a class of constrained games on networks where the players have asymmetric information regarding the constraints. We have proposed a new concept, \textit{awareness}, to quantify the players' knowledge about the constraints. We have shown that the growth in network connectivity is equivalent to the increase in the awareness level. When there is a unique GNE in a globally-aware game, then the GNE in the corresponding locally-aware game is also unique, and these two GNEs coincide. When the connectivity of the network increases, the game admits more GNEs. We have illustrated the relationship between the GNE and the VE in the globally-aware games using the Lagrangian approach: to find the VE using Theorem~\ref{neceandsuffic}, we need to impose an additional constraint on the Lagrange multipliers, which requires the Lagrange multiplier vector of each player to be the same. We have shown that the constrained game considered in this work can be decomposed into two unconstrained games. Using this decomposition, we have shown that the Cournot game is equivalent to a Bertrand game.

As for future work, we would investigate the design of network leveraging the awareness of the constraints to achieve desirable network performances from the designer's perspective. 
In addition, we have assumed that the awareness levels of the players are chosen by nature. We could also explore the case where the awareness level can be determined by the players, i.e., the players can choose to ignore or be aware of the constraints.

\addtolength{\textheight}{-12cm}   








\bibliographystyle{IEEEtran}
\bibliography{reference.bib}

\end{document}